\newtheorem{thm}{Theorem}[section]
\newtheorem{lem}[thm]{Lemma}
\theoremstyle{definition}
\newtheorem{defn}[thm]{Definition}
\newtheorem{rem}[thm]{Remark}
\begin{document}

\title{An Application of Group Theory in Confidential Network Communications \thanks{The Research was supported in part by the Swiss National Science
Foundation under grant No.\@ 169510. First author is partially
suppported by  Ministerio de Economia y Competitividad grant
MTM2014-54439 and Junta de Andalucia (FQM0211). The last author is
supported by Armasuisse.
} } 

\author{
J.A.~L\'opez-Ramos\footnote{University of Almeria},
J.~Rosenthal\footnote{University of Zurich},
D.~Schipani\footnotemark[3], R.~Schnyder\footnotemark[3] }

\maketitle

\begin{abstract}
A new proposal for group key exchange is introduced which proves to be both efficient and secure and compares favorably with state of the art protocols.
\end{abstract}

\section{Introduction}

Group Key Exchange (GKE) is currently a highly relevant topic
 due to the explosion of group communications  in many applications that provide information exchange (cf.
\cite{lee} and \cite{vandenmerwe} and their references). A very
recent application can be found in wireless sensor networks (WSNs),
core of the so-called Internet of Things, that consists of tiny
autonomous low-cost low-power devices that carry out monitoring
tasks. WSNs can be found in many civil applications. The devices are
called sensor nodes and the monitored data is typically collected at
a base station, that will be later processed in data mining servers.

\medskip

Since Ingemarsson et al.\@ in \cite{ingemarsson} made an attempt to
extend Diffie-Hellman two-party key exchange given in \cite{diffie},
many works have dealt with this issue, i.e., providing protocols for a
group of communication nodes that allow this group to build a
common key in a distributed and collaborative manner. In
\cite{steer} the authors gave a distributed protocol for GKE that
does not run efficiently in the Initial Key Agreement (IKA), which
is, in most cases, the main problem. Probably  two of the best
known distributed protocols are given in \cite{steiner1} and
\cite{steiner2}, and \cite{burmester1} and \cite{burmester2},
respectively, and both extend naturally the foundational
Diffie-Hellman key exchange.

\medskip

The protocol introduced in \cite{burmester1} and \cite{burmester2}
proves to be very efficient in the IKA, using just two rounds.
However, further rekeying operations require executing the protocol
completely as in the initial phase (IKA) and produce a big number
of messages, using a large bandwidth as the number of
communication nodes grows.

On the other hand, the protocol introduced in \cite{steiner1} and
\cite{steiner2} is quite efficient in the rekeying processes, i.e.,
rekeying operations once the group has shared a first key. The authors
give an Auxiliary Key Exchange (AKA) that makes use of a single
broadcasted message, although the IKA protocol is considerably
slower than the preceding one since it requires as many rounds as
the number of participants.

\medskip

But the main issue in both efficient proposals is their security. In
\cite{schnyder} and \cite{attackbur} active attacks to the systems
proposed in \cite{steiner1} and \cite{steiner2}, and
\cite{burmester1} and \cite{burmester2} respectively are introduced.
The authors show the possibility of sharing a common key with the
components of a communication group, without letting them notice
anything. However, one of the IKA proposals introduced in
\cite{steiner1} and \cite{steiner2} (the so-called IKA.1) avoids
this attack, but, as pointed out above, it requires a big numbers of
rounds as the number of users grows.

\medskip

In this paper we are introducing a new proposal that avoids both attacks (\cite{attackbur} and
\cite{schnyder}) and shows the best characteristics of the
aforementioned proposals: on one hand, the key is obtained in a
distributed IKA with just two rounds and, on the other, the AKA
protocol that provides rekeying operations is developed by means of a
single message. Our protocol extends naturally the Diffie-Hellman
key exchange as well and we show that its security is based on the
difficulty of problems that refer to decisions in
the group where the two party Diffie-Hellman key exchange takes
place. 
 Namely, the Decisional Diffie-Hellman (DDH) problem in a group $G$ is the problem to decide whether,
given $g$ in $G$ and random $x,y,z$ in $G$,
$z$ equals $x^{\log_g y}$.



\section{The Initial Key Agreement}

Let us start by establishing the general setting for the GKE
protocol. Participants in the communication process will be given by
the set $\{ U_1, \ldots  U_n \}$. The group of users agree on a
cyclic group $G$ of order a prime $q$ and a generator $g$ of $G$.

\medskip

Every participant ${\cal U}_i$ holds two pairs of private-public
keys, say $(r_i,g^{r_i})$ and $(x_i,g^{x_i})$. One of these users
will be the group controller that we will denote by ${\cal
U}_{c_1}$, for some $c_1$ in the set $\{ 1, \ldots ,n\}$. He will be
in charge of sending the initial keying information as well as the
following rekeying messages in case we wish to define a centralized
protocol. However, as we will see in the following section, the
character of the protocol can change from centralized to distributed
(and vice versa) at any point of the following rekeying stages. The
protocol that describes the initial key agreement is given by the
following steps.

\medskip

\noindent {\it Protocol 1:}

\medskip

{\bf First Round:}

\begin{enumerate}

\item Every user ${\cal U}_i$ publishes his pair of public keys $(g^{r_i}, g^{x_i})$, $i=1,
\ldots ,n$, $i\not= c_1$.

\item The group controller ${\cal U}_{c_1}$ computes the key
$K_1=g^{r_{c_1}\sum_{j=1, j\not= c_1}^nr_j}$.

\item The group controller takes a new pair of private elements $(r'_{c_1}, x'_{c_1})$,
which becomes his new private information. This will be used in the case of a rekeying operation at a later stage.

\noindent {\bf Second Round:}

\item Every user, using the public
information, ${\cal U}_i$, $i=1, \ldots ,n$, $i\not= c_1$, computes
$g^{\sum_{j=1, j\not= c_1,i}^n r_j}$ and sends this value to $U_{c_1}$.

\item The group controller ${\cal U}_{c_1}$ broadcasts the keying message
$$\{ Y_{1,1}, \ldots , Y_{1,c_1}, \ldots ,Y_{1,n}, R_1, S_1 \}$$

\noindent where $Y_{1,i}=g^{-x_{c_1}x_i}\left (
g^{r_{c_1}\sum_{j\not= c_1,i} r_j }\right )$, for $i=1, \ldots ,n$,
$i\not= c_1$,

\noindent $Y_{1,c_1}=K_1g^{-r'_{c_1}r_{c_1}}g^{-x'_{c_1}x_{c_1}}$,
and $R_1=g^{r_{c_1}}$ and $S_1=g^{x_{c_1}}$.

\item Every user ${\cal U}_i$ computes $K_{1,i}=Y_{1,i} S_1^{x_i}R_1^{r_i}$, $i=1, \ldots ,n$, $i\not= c_1$.
\end{enumerate}

The proof of the following Lemma is straightforward and shows the
correctness of the protocol.

\begin{lem}
$K_{1,i}=K_1$ for every $i=1, \ldots ,n$, $i\not= c_1$.
\end{lem}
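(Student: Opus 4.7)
The plan is a direct substitution: unpack the definition of $K_{1,i}$ in terms of the broadcast value $Y_{1,i}$ together with the public values $R_1 = g^{r_{c_1}}$ and $S_1 = g^{x_{c_1}}$, and then use commutativity of the exponents in the abelian cyclic group $G$ to collapse the product into $K_1$.

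Concretely, I would first write out
\[
K_{1,i} \;=\; Y_{1,i}\, S_1^{x_i}\, R_1^{r_i}
       \;=\; g^{-x_{c_1}x_i}\,g^{r_{c_1}\sum_{j\neq c_1,i} r_j}\;g^{x_{c_1} x_i}\;g^{r_{c_1} r_i}.
\]
The two terms $g^{-x_{c_1}x_i}$ and $g^{x_{c_1}x_i}=S_1^{x_i}$ cancel, which is the whole point of the mask that ${\cal U}_{c_1}$ applies with his public key $S_1$ and user ${\cal U}_i$'s public key $g^{x_i}$. What remains is $g^{r_{c_1}\sum_{j\neq c_1,i} r_j}\cdot g^{r_{c_1} r_i}$, and combining the exponents reintroduces the missing index $i$ in the sum, yielding
\[
g^{r_{c_1}\left(\sum_{j\neq c_1,i} r_j + r_i\right)} \;=\; g^{r_{c_1}\sum_{j\neq c_1} r_j} \;=\; K_1.
\]

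There is no real obstacle here; the only thing to be careful about is that user ${\cal U}_i$ does possess exactly the quantities needed to perform this computation, namely his own secret $x_i$ and $r_i$ together with the broadcast values $Y_{1,i}$, $R_1$, $S_1$. In particular, ${\cal U}_i$ never needs to know any other participant's private data, which is the content of the lemma from a usability standpoint even though the algebraic identity itself is immediate.
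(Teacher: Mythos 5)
Your computation is correct and is exactly the "straightforward" verification the paper intends (it omits the proof, declaring it immediate, and the analogous Lemma for the AKA protocol is proved by the same kind of substitution-and-cancellation). The cancellation of $g^{-x_{c_1}x_i}$ against $S_1^{x_i}$ and the reabsorption of $R_1^{r_i}$ into the sum are precisely the intended argument, so nothing is missing.
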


\begin{rem}
Let us assume that the number of users is $n=2$ and that $g^{x_i}=e$
for $i=1,2$, where $e\in G$ is the neutral element. Now if ${\cal U}_1$ makes public $g^{r_1}$ in the first
round, ${\cal U}_2$ will send the keying message $\{ e, R_1=g^{r_2}
\}$ in the second round. Thus our protocol is a natural extension
of  the classical Diffie-Hellman key exchange in the group $G$.
\end{rem}

It can be observed in the preceding protocol that user ${\cal
U}_{c_1}$ bears most of the workload. The protocol is
designed in such a way that every node publishes just  a pair of public
keys, while ${\cal U}_{c_1}$ computes what is required for the first
keying. This could be the case when ${\cal U}_{c_1}$ is a server
that processes the pieces of information transmitted by every user. However,
in case every user has similar capabilities, we can slightly modify
the preceding protocol and distribute the computation requirements. As
previously, every user holds a pair of private keys $(r_i,x_i)$.

\medskip

\noindent {\it Protocol 2:}

\medskip

{\bf First Round:}

\begin{enumerate}

\item Every user ${\cal U}_i$ publishes his public key $g^{r_i}$, $i=1,
\ldots ,n$, $i\not= c_1$.

\item The group controller ${\cal U}_{c_1}$ computes the key
$K_1=g^{r_{c_1}\sum_{j=1, j\not= c_1}^nr_j}$.

\item The group controller takes a new pair of private elements $(r'_{c_1}, x'_{c_1})$,
which becomes his new private information.

\noindent {\bf Second Round:}

\item Every user, using the public information, ${\cal U}_i$, $i=1,
    \ldots ,n$, $i\not= c_1$, computes $g^{\sum_{j=1,
j\not= c_1,i}^n r_j}g^{-x_i}$ and sends this value to $U_{c_1}$.

\item The group controller ${\cal U}_{c_1}$ broadcasts the keying message
$$\{ Y_{1,1}, \ldots , Y_{1,c_1}, \ldots ,Y_{1,n}, R_1 \}$$

\noindent where $Y_{1,i}=\left (
g^{r_{c_1}\sum_{j=1, j\not= c_1,i}r_j}\right )g^{-r_{c_1}x_i}$, for
$i=1, \ldots ,n$, $i\not= c_1$,

\noindent $Y_{1,c_1}=K_1g^{-r'_{c_1}r_{c_1}}g^{-x'_{c_1}r_{c_1}}$,
and $R_1=g^{r_{c_1}}$.

\item Every user ${\cal U}_i$ computes $K_{1,i}=Y_{1,i} R_1^{x_i}R_1^{r_i}$, $i=1, \ldots ,n$, $i\not= c_1$.
\end{enumerate}

We will now state the security of the preceding protocols. To this
end let us now recall the following definition.

\begin{defn}\cite[Definition 2.2]{burmester2}
Let ${\cal P}$ be a GKE protocol and ${\cal A}$ a passive adversary.
Assume that ${\cal A}$ has witnessed polynomially-many instances of
${\cal P}$ and let $K$ be the key output by the last instance.


We will say that ${\cal P}$ guarantees secrecy if ${\cal A}$ cannot
distinguish $K$ from a random bit string of the same length with
probability better than $1/2 +\varepsilon$, where $\varepsilon$ is
negligible
\end{defn}

\begin{thm}\label{secrecy}
If the DDH  problem is intractable, then Protocols 1 and 2 provide
 secrecy.
\end{thm}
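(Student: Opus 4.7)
The plan is to reduce distinguishing the session key $K_1$ from uniform to DDH via a sequence of hybrid games, moving from the real transcript together with the true $K_1$ to the same transcript with $K_1$ replaced by an independent uniform element of $G$. Each transition will be justified either information-theoretically or by a single DDH step, and an overall multi-instance hybrid will absorb the $O(n)$ DDH uses.

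I would first observe that $Y_{1,c_1} = K_1\cdot g^{-r'_{c_1}r_{c_1}-x'_{c_1}x_{c_1}}$ is information-theoretically independent of $K_1$: the values $r'_{c_1}, x'_{c_1}$ are fresh uniform randomness appearing nowhere else in the adversary's view, so the mask is uniform in $G$. Then, for each $i\neq c_1$ I would replace $Y_{1,i}$ by an independent uniform element. Rewriting gives $Y_{1,i} = K_1 \cdot g^{-x_{c_1}x_i}g^{-r_{c_1}r_i}$ in Protocol 1 and $Y_{1,i} = K_1 \cdot R_1^{-(x_i+r_i)}$ in Protocol 2, and in each case the mask is a Diffie--Hellman value whose base elements are recoverable from the adversary's view (directly from the public keys in Protocol 1; via the second-round message $M_i$, which reveals $g^{-x_i}$, in Protocol 2). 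A DDH hybrid over the $n-1$ non-controller users thus replaces every such mask by uniform. In Protocol 2 I would also do a preliminary hybrid replacing each $M_i = g^{\sum_{j\neq c_1,i}r_j - x_i}$ by a uniform element, which is justified information-theoretically once $Y_{1,i}$ has been randomized (since $x_i$ then appears only in $M_i$).

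At this point the adversary sees only the public keys, uniform group elements in place of the $Y$'s and $M$'s, and the key $K_1 = g^{r_{c_1}\sum_{j\neq c_1} r_j}$. A final DDH reduction replaces $K_1$ by uniform: given a DDH challenge $(g, g^a, g^b, C)$, set $g^{r_{c_1}} := g^a$ and choose the remaining $g^{r_j}$'s so that $\prod_{j\neq c_1}g^{r_j} = g^b$ (e.g.\ pick most of them at random and fix one to absorb $g^b$); then output $C$ as the candidate key. Summing over the $O(n)$ transitions yields a negligible total advantage, which is exactly secrecy.

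The main obstacle I anticipate is maintaining consistency across the multi-instance hybrid. Every $Y_{1,i}$ involves the common controller exponents $r_{c_1}, x_{c_1}$ that also appear in $R_1, S_1$ and, eventually, in the embedded DDH base $g^a$, so a naive per-user reduction risks incompatible embeddings. The standard fix is to invoke the random self-reducibility of DDH: re-randomize the per-user argument while keeping the shared base fixed, so that the single-instance DDH reductions compose cleanly into one distinguisher whose advantage matches $\mathcal{A}$'s up to a polynomial factor. The argument is essentially identical for Protocols 1 and 2, differing only in the concrete algebraic form of the masks.
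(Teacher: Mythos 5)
Your argument is sound and reaches the right conclusion, but it takes a different route from the paper. The paper's proof is very short: it observes that the keying message is exactly a multi-recipient ElGamal encryption of $K_1$ with re-used randomness --- for each $i\neq c_1$ the pair $\bigl(g^{r_{c_1}\sum_{j\neq c_1,i}r_j},R_1\bigr)$ is an ElGamal encryption of $K_1$ under $g^{r_i}$ with randomness $r_{c_1}$, re-encrypted under $g^{x_i}$ with randomness $x_{c_1}$ to give $(Y_{1,i},S_1)$, and $Y_{1,c_1}$ is treated the same way under the keys $g^{r'_{c_1}},g^{x'_{c_1}}$ --- and then it simply invokes Lemma 1 and Theorem 1 of Bellare--Boldyreva--Staddon on randomness re-use in multi-recipient encryption to conclude secrecy under DDH. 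You instead inline what that citation hides: a game-hopping argument that randomizes $Y_{1,c_1}$ information-theoretically, randomizes each $Y_{1,i}$ by a DDH hybrid over the $n-1$ users (using random self-reducibility to cope with the shared base $S_1$, resp.\ $R_1$), and finishes with one more DDH step replacing $K_1=g^{r_{c_1}\sum_{j\neq c_1}r_j}$ by uniform. The underlying mechanism is the same --- the cited theorem is itself proved by exactly such a reproducibility/self-reducibility hybrid --- so what you gain is self-containedness and explicitness on points the paper glosses over: the correlation between the masks and the key (you correctly shunt the DDH embedding onto the $x$-components, which do not occur in $K_1$), and, for Protocol 2, the fact that the second-round messages leak $g^{-x_i}$, which the paper's ``follows similarly'' never addresses and which your preliminary hybrid on the $M_i$ handles cleanly. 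What the paper's route buys is brevity and a modular appeal to a known result; a fully detailed write-up of your version should track the $O(n)$ loss explicitly, but as sketched it is correct.
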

\begin{proof}
We observe that we can see the broadcasted message in
Protocol 1 as a multiple ElGamal type of encryption in the following
way.  For $i\neq c_1$ we first encrypt $K_1$ using the public value
$g^{r_i}$ and $r_{c_1}$ as random parameter,
obtaining
$\big( X_{1,i}, R_1 \big) =
\big( g^{r_{c_1}\sum_{j\not= c_1,i} r_j }, g^{r_{c_1}} \big)$ and
then we encrypt $X_i$ using the public key $g^{x_{i}}$ and
$x_{c_1}$ as a random parameter, obtaining the pair $\big( Y_{1,i},
S_1 \big)$.

The case of $Y_{1,c_1}$ is analogous using the elements
$g^{r'_{c_1}}$ and $g^{x'_{c_1}}$, that, although unknown
to a passive adversary, could also be made public.

Now using Lemma 1 and Theorem 1 of \cite{bellare} we
can deduce the thesis.

The security of Protocol 2 follows similarly.

\end{proof}



\begin{rem}
For ease of notation we have presented the preceding protocols using the action $\Phi
(y,g^x)=(g^x)^y$, but more general scenarios based on linear actions can be considered too, as in \cite{groupkey}.
\end{rem}

\begin{rem}
We could wonder about the necessity of using two different keys for
every user. To clarify this  suppose only one key is used. 
Then we would share a key of the form
$K = g^{k_j\sum_{r=1,r\not=j}^nk_r}$. Without the $x_i$, an adversary could access the
messages
$$D_i=g^{k_j\sum_{r=1,r\not=i,j}^nk_r},  \ i\not= j, \ i=1, \dots ,n,$$
from which she can compute $\prod_{r=1,r \ne j}^{n} D_r=K^{n-2}$. In
the case where the order $q$ of $S$ is known, the adversary can now
recover the key $K$ from $K^{n-2}$ by inverting $n-2$ modulo $q$.
This is in particular the case where $G$ is a subgroup of a finite
field, or where it is the group of points of an elliptic curve.
Using two different keys for every user avoids the above situation
in these cases. However, the use of a single key for every user could still apply in other settings \cite{groupkey}.
\end{rem}




\section{The Auxiliary Key Agreement}

In the preceding section we have introduced a protocol to build a
common key based on the information held by every user. As we can
observe this may require some computational resources on one of the
participants. On the other hand, this session key may expire due
simply to key caducity or to changes in the communication group, i.e.,
users may join or leave the group and we are concerned about preserving
secrecy of previous, respectively, later communications. In this
section we provide an Auxiliary Key Agreement that solves this matter
very efficiently and, moreover, allows either to keep the centralized
aspect of the GKE, or to change to a distributed scheme, allowing
any user to provide a new key for the remaining members of the group.

\medskip

In the more general setting, we are assuming that some rekeying
rounds have occurred,
and in the next step keys are to be renewed again possibly by a new controller.
The
following protocol shows the Auxiliary Key Agreement 
after $t-1$ rekeying rounds, $t>1$, whereby $K_t$ denotes the last common
key shared by the group. The user in charge of the $t$-th rekeying
will be user ${\cal U}_{c_t}$, distinct from the preceding
controller, and thus, rekeying information of this will be needed.
Without loss of generality, we may assume that the precedent
controller was user ${\cal U}_{c_1}$, the key shared was $K_{t-1}$ and the last rekeying
message was
$$\{ Y_{t-1,1}, \ldots ,Y_{t-1,c_1}, \ldots ,Y_{t-1,n}, R_{t-1}, S_{t-1} \}$$

\noindent where $Y_{t-1,c_1}=K_{t-1}g^{-r'_{c_1}r_{c_1}}g^{-x'_{c_1}x_{c_1}}$, following Protocol 1.
The proposed AKA is then as follows:
\noindent {\it Protocol 3:}

\begin{enumerate}

\item User ${\cal U}_{c_t}$ computes two new elements $r'_{c_t}$ and
$x'_{c_t} \in G$ that become his new private information.

\item User ${\cal U}_{c_t}$ computes the new session key
$ K_{t-1}^{r'_{c_t}}$.

\item User ${\cal U}_{c_t}$ broadcasts the rekeying message
$$\{ Y_{t,1}, \ldots , Y_{t,c_t}, \ldots ,Y_{t,n}, R_t, S_t \}$$

\noindent where $Y_{t,i}=Y_{t-1,i}^{r'_{c_t}}$, $i\neq c_t$,

\noindent $Y_{t,c_t}=K_t R_{t-1}^{-r'_{c_t}r'_{c_t}} S_{t-1}^{-r'_{c_t}x'_{c_t}}$,

\noindent and $R_t=R_{t-1}^{r'_{c_t}}$ and $S_t=
S_{t-1}^{r'_{c_t}}$.

\item Every user ${\cal U}_i$ computes $K_{t,i}=Y_{t,i}S_t^{x_i}
R_t^{r_i}$, $i=1, \ldots ,n$, $i\not= c_t$.

\end{enumerate}

\begin{lem}\label{correct}
$K_{t,i}=K_t$ for every $i=1, \ldots ,n$.
\end{lem}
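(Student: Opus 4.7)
The plan is to prove this by induction on the rekeying round $t$. The base case $t=1$ coincides with the correctness of the initial key agreement established in the preceding lemma (and its obvious analogue for Protocol~2); it asserts that every non-controller $i \neq c_1$ recovers $K_1$, while ${\cal U}_{c_1}$ holds $K_1$ by construction. For the inductive step I would assume $K_{t-1,i} = K_{t-1}$ for every $i$ and argue for round $t$.

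The key observation that drives the induction is that, apart from the current controller's own slot $c_t$, every ingredient of the round-$t$ broadcast is obtained from its round-$(t-1)$ counterpart by a single exponentiation by $r'_{c_t}$:
\begin{equation*}
Y_{t,i} = Y_{t-1,i}^{r'_{c_t}}, \qquad R_t = R_{t-1}^{r'_{c_t}}, \qquad S_t = S_{t-1}^{r'_{c_t}}.
\end{equation*}
Consequently, for any $i \neq c_t$ the recovery formula of step~4 telescopes cleanly:
\begin{equation*}
K_{t,i} \;=\; Y_{t,i}\,S_t^{x_i}\,R_t^{r_i} \;=\; \bigl(Y_{t-1,i}\,S_{t-1}^{x_i}\,R_{t-1}^{r_i}\bigr)^{r'_{c_t}} \;=\; K_{t-1,i}^{r'_{c_t}}.
\end{equation*}
The inductive hypothesis then gives $K_{t,i} = K_{t-1}^{r'_{c_t}}$, which is precisely $K_t$ by its definition in step~2 of the protocol. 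The case $i = c_t$ is immediate since the current controller computed $K_t$ directly.

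The one bookkeeping point that needs care is the case $i = c_s$ for some former controller with $s < t$: such a user recovers $K_t$ using her updated pair $(r'_{c_s}, x'_{c_s})$ in place of $(r_i, x_i)$, which is precisely why the controller's slot $Y_{s,c_s}$ was assembled with the extra factors $R_{s-1}^{-r'_{c_s}r'_{c_s}} S_{s-1}^{-r'_{c_s}x'_{c_s}}$ cancelling $R_s^{r'_{c_s}} S_s^{x'_{c_s}}$. The same telescoping identity applies verbatim with the primed exponents, so the induction carries through uniformly. I expect this indexing of former controllers to be the only source of notational friction; the algebra itself is entirely mechanical, being nothing more than pulling the common exponent $r'_{c_t}$ outside a product of three terms.
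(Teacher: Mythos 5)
Your proposal is correct and takes essentially the same route as the paper: the paper's proof is exactly the telescoping computation $K_{t,i}=Y_{t,i}S_t^{x_i}R_t^{r_i}=\bigl(Y_{t-1,i}S_{t-1}^{x_i}R_{t-1}^{r_i}\bigr)^{r'_{c_t}}=K_{t-1}^{r'_{c_t}}=K_t$, with the appeal to the previous round's correctness left implicit. Your explicit induction and the bookkeeping for former controllers recovering the key with their primed pairs merely spell out what the paper's one-line computation takes for granted.
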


\begin{proof}
For $i=1, \ldots ,n$,
$i\not= c_t$

$$\begin{array}{rl} K_{t,i} & = Y_{t,i}S_t^{x_i}R_t^{r_i}
\\
& =  Y_{t-1,i}^{r'_{c_t}} S_{t-1}^{x_i r'_{c_t}} R_{t-1}^{r_i r'_{c_t}} \\
& =  (Y_{t-1,i} S_{t-1}^{x_i } R_{t-1}^{r_i })^{r'_{c_t}}   \\
& =  K_{t-1}^{r'_{c_t}}\\
& =  K_t.
\end{array}$$

\end{proof}



\subsection{Altering the membership}

As it was previously pointed out members of the group can be
constantly changing: some of them may leave the group and other may
wish to join it. For this reason it is convenient that those joining
the group should not be able to get previously distributed keys and
those leaving it should not get future distributed keys in order to
preserve confidentiality of former and future communications.

\medskip

In the case of a set of members leaving the group, rekeying is made
naturally following Protocol 2, but erasing those positions in the
rekeying message corresponding to those users leaving the group.

\medskip

On the other hand joining operation can be carried out in a massive
way and by any of the users. Let us assume that after the $t$-th
rekeying message, $l$ users wish to join the group. Then the
corresponding values $R_t$ and $S_t$ are made public and without
loss of generality, we may suppose that these $l$ users sent a
petition to join the group to a user that will be ${\cal
U}_{c_{t+1}}$ for some $c_{t+1}$ in the set $\{ 1, \ldots ,n\}$.

In case the petitions are sent to different users, each of them may
collect the received petitions and send a rekeying message including
the petitioners after a period of time. The protocol is as follows:

\noindent {\it Protocol 4.}

\begin{enumerate}

\item Every new user ${\cal U}_{n+j}$, $j=1, \ldots ,l$,  sends a petition to user
${\cal U}_{c_{t+1}}$ jointly with the pair $R_t^{r_{n+j}},
S_t^{x_{n+j}}$, where $r_{n+j}, x_{n+j}\in G$ is the user ${\cal
U}_{n+j}$'s private information.

\item User ${\cal U}_{c_{t+1}}$ computes two new elements $r'_{c_{t+1}},
x'_{c_{t+1}}\in G$ that become his new private information.

\item User ${\cal U}_{c_{t+1}}$ computes the new key
$K_{t+1}= \big(
K_t \displaystyle{ R_t^{\sum_{i=1}^l r_{n+i}}}\big) ^{r'_{c_{t+1}}}$.

\item User ${\cal U}_{c_{t+1}}$ broadcasts the rekeying message

$$\{ Y_{t+1,1}, \ldots , Y_{t+1,c_{t+1}}, \ldots , Y_{t+1,n},
Y_{t+1,n+1}, \ldots , Y_{t+1,n+l}, R_{t+1}, S_{t+1} \}$$

\noindent where $Y_{t+1,i}=
\big(
Y_{t,i}\displaystyle{ R_t^{\sum_{i=1}^l r_{n+i}}}\big) ^{r'_{c_{t+1}}}$,
for $i=1, \ldots ,n$, $i\not= c_{t+1}$,

\noindent $Y_{t+1,c_{t+1}}=K_{t+1} R_t^{-r'_{c_{t+1}}r'_{c_{t+1}}}
S_t^{-r'_{c_{t+1}}x'_{c_{t+1}}}$,

\noindent $Y_{t+1,i}= K_{t+1} R_t^{-r_{i}r'_{c_{t+1}}}
S_t^{-x_{i}r'_{c_{t+1}}}$, for $i=n+1,
\ldots ,n+l$,

\noindent $R_{t+1}=R_t^{r'_{c_{t+1}}}$ and
$S_{t+1}=S_t^{r'_{c_{t+1}}}$.

\item Every user ${\cal U}_i$ computes $K_{t+1,i}=Y_{t+1,i}S_{t+1}^{x_i}
R_{t+1}^{r_i}$, $i=1, \ldots ,n+l$, $i\not= c_{t+1}$.

\end{enumerate}

Analogously to Lemma \ref{correct} we get the following.

\begin{lem}
After Protocol 3, $K_{t+1,i}=K_{t+1}$ for every $i=1, \ldots ,n+l$,
$i\not= c_{t+1}$.
\end{lem}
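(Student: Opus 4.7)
The plan is to verify the identity separately for the two types of users: the "old" members $i \in \{1,\ldots,n\}\setminus\{c_{t+1}\}$, whose key-reconstruction works by piggy-backing on the previous round, and the "new" members $i \in \{n+1,\ldots,n+l\}$, whose $Y_{t+1,i}$ was constructed from scratch. In both cases the computation is a mechanical unfolding of the definitions, so no deep argument is required; the only thing that needs to be arranged carefully is the order in which one pulls out the common exponent $r'_{c_{t+1}}$.

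For an old member $i \neq c_{t+1}$, I would substitute the definitions of $Y_{t+1,i}$, $R_{t+1}$ and $S_{t+1}$ into $K_{t+1,i}=Y_{t+1,i}S_{t+1}^{x_i}R_{t+1}^{r_i}$, obtaining
\[
K_{t+1,i} \;=\; \bigl(Y_{t,i}\,R_t^{\sum_{j=1}^l r_{n+j}}\bigr)^{r'_{c_{t+1}}}\! \cdot S_t^{x_i r'_{c_{t+1}}} \cdot R_t^{r_i r'_{c_{t+1}}}.
\]
Factoring the exponent $r'_{c_{t+1}}$ out gives $\bigl(Y_{t,i}S_t^{x_i}R_t^{r_i}\bigr)^{r'_{c_{t+1}}} \cdot R_t^{r'_{c_{t+1}}\sum r_{n+j}}$, and by Lemma~\ref{correct} applied to round $t$ the first factor equals $K_t^{r'_{c_{t+1}}}$. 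Combining the two factors yields exactly $\bigl(K_t R_t^{\sum r_{n+j}}\bigr)^{r'_{c_{t+1}}} = K_{t+1}$, as required.

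For a new member $i \in \{n+1,\ldots,n+l\}$ the verification is even more direct: plugging in the definition of $Y_{t+1,i}$ and using $R_{t+1}^{r_i}=R_t^{r_i r'_{c_{t+1}}}$ and $S_{t+1}^{x_i}=S_t^{x_i r'_{c_{t+1}}}$, the masking terms $R_t^{-r_i r'_{c_{t+1}}}$ and $S_t^{-x_i r'_{c_{t+1}}}$ cancel exactly, leaving $K_{t+1,i}=K_{t+1}$.

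The only mild subtlety — which I would flag explicitly rather than call an "obstacle" — is the implicit induction on $t$: the appeal to Lemma~\ref{correct} in the old-member case presupposes that every user (including those who were "new" at some earlier stage) has correctly reconstructed $K_t$ at the previous round, so that $Y_{t,i}S_t^{x_i}R_t^{r_i}=K_t$ holds uniformly. This is however immediate by induction, since the base case is handled by Protocol~1 and the inductive step by Lemma~\ref{correct} together with the computation above; no new ideas are needed.
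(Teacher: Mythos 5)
Your proof is correct and matches the paper's intent: the paper omits the argument entirely, stating only that the result follows ``analogously to Lemma~\ref{correct}'', and your computation is exactly that analogous verification, i.e.\ factoring out $r'_{c_{t+1}}$ and invoking the round-$t$ identity $Y_{t,i}S_t^{x_i}R_t^{r_i}=K_t$ for old members, and direct cancellation of the masking terms for the newly joined members. Your remark on the implicit induction (the round-$t$ identity holding uniformly, including for previous controllers with their updated private keys and for earlier joiners) is a sensible clarification and does not change the approach.
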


Also, using the same argument as in the IKA protocol, we may show that protocols
for rekeying and altering the membership provide secrecy for the former and
future keys distributed.


\begin{thebibliography}{10}

\bibitem{attackbur} M. Baouch, J.A. Lopez-Ramos, R. Schnyder, B.
Torrecillas, An active attack on a distributed Group Key Exchange
system, preprint.

\bibitem{bellare} M. Bellare, A. Boldyreva, J. Staddon, Randomness Re-use in Multi-recipient
Encryption Schemeas, Y.G. Desmedt (Ed.): PKC 2003, LNCS 2567,
85--99, Springer-Verlag, Berlin Heidelberg 2003

\bibitem{burmester1} M. Burmester, I. Desmedt, A secure and efficient conference key distribution system, in: Proc. Eurocypt`94, in: Lecture Notes in Comput. Sci., vol. 950, Springer-Verlag, Berlin, 1995, 275-286.

\bibitem{burmester2} M. Burmester, I. Desmedt, A secure and scalable Group Key Exchange system, Information Proc. Letters 94, 2005, 137-143.


\bibitem{diffie}
W.D.~Diffie, M.E.~Hellman, New directions in cryptography, IEEE
Transactions on Information Theory, vol. 22(6), 644--654, 1976.

\bibitem{ingemarsson} I. Ingemarsson, D. Tang, C. Wong. A Conference Key Distribution
System, IEEE Trans. Information Theory, vol. 28, no. 5, 714--720,
1982.

\bibitem{lee} P.P.C. Lee, J.C.S. Lui, D.K.Y. Yau, Distributed Collaborative Key Agreement and
Authentication Protocols for Dynamic Peer Groups, IEEE/ACM Trans.
Networking 14(2), 263-276, 2006.

\bibitem{groupkey}
J. A. Lopez Ramos, J. Rosenthal, D. Schipani, R. Schnyder,
Group key management based on semigroup actions,
arXiv preprint, to appear in Journal of Algebra and its applications.

\bibitem{maze} G.~Maze, C.~Monico, J.~Rosenthal, Public key cryptography based on
semigroup actions, Advances of Mathematics of Communications, vol.
1(4), 489--507, 2007.

\bibitem{schnyder} R.~Schnyder, J.A.~Lopez-Ramos, J.~Rosenthal,
D.~Schipani, An active attack on a multiparty key exchange protocol,
Journal of Algebra Combinatorics Discrete Structures and
Applications 3(1),  31--36, 2016.

\bibitem{steer} D. G. Steer, L. Strawczynski, W. Diffie, M. Wiener, A Secure Audio Teleconference
System, Advances in Cryptology ``CRYPTO'' 88, Lecture Notes in
Computer Science, vol. 403, 520--528, 2000.


\bibitem{steiner1} M.~Steiner, G.~Tsudik, M.~Waidner, Diffie-Hellman key
distribution extended to group communication, Proceedings of the 3rd
ACM Conference on Computer and Communications Security, ACM: New
York, NY, 31--37, 1996.

\bibitem{steiner2} M.~Steiner, G.~Tsudik, M.~Waidner, Key agreement in dynamic
peer groups. IEEE Transactions of Parallel and Distributed Systems,
11(8), 769--780, 2000.

\bibitem{vandenmerwe} J. Van der Merwe, D. Dawoud, S. McDonald, A survey on peer-to-peer key management for mobile ad hoc networks, ACM Computing Surveys 39 (1)
2007.

\end{thebibliography}
\end{document}